\ifpdf\usepackage{pst-pdf}\else\fi
\setlist[enumerate]{label=\textnormal{\bfseries(\alph*)}, leftmargin=*, nosep, widest=a}
\let\oldautoref\autoref
\renewcommand\autoref[1]{\@first@ref#1,@}
\def\@throw@dot#1.#2@{#1}
\def\@set@refname#1{
    \edef\@tmp{\getrefbykeydefault{#1}{anchor}{}}%
    \def\@refname{\@nameuse{\expandafter\@throw@dot\@tmp.@autorefname}s}%
}
\def\@first@ref#1,#2{%
  \ifx#2@\oldautoref{#1}\let\@secondref\@gobble
  \else%
    \@set@refname{#1}
    \@refname~\ref{#1}
    \let\@secondref\@second@ref
  \fi%
  \@secondref#2%
}
\def\@second@ref#1,#2{%
  \ifx#2@ and~\ref{#1}\let\@nextref\@gobble
  \else, \ref{#1}
    \let\@nextref\@next@ref
  \fi%
  \@nextref#2%
}
\def\@next@ref#1,#2{%
   \ifx#2@, and~\ref{#1}\let\@nextref\@gobble
   \else, \ref{#1}
   \fi%
   \@nextref#2%
}
\def\equationautorefname~#1\null{%
  Equation~(#1)\null
}
\theoremstyle{plain}
\newtheorem{theorem}{Theorem}
\newtheorem{lemma}{Lemma}
\newtheorem{remark}{Remark}
\theoremstyle{definition}
\newtheorem{definition}{Definition}
\newcommand{\smashedtilde}[1]{\vphantom{#1}\smash{\tilde{#1}}}
\begin{document}

\title[Numerical approximations of fractional Brownian motion]{Strong convergence rates for Markovian representations of fractional processes}
\date{\today} 

\author{Philipp Harms}
\address{Department of Stochastics\\ University of Freiburg}
\email{philipp.harms@stochastik.uni-freiburg.de}
\thanks{The author gratefully acknowledges support in the form of a Junior Fellowship of the Freiburg Institute of Advances Studies.}
\subjclass[2010]{60G22, 
60G15, 
65C05, 
91G60
}

\begin{abstract}
Many fractional processes can be represented as an integral over a family of Ornstein--Uhlenbeck processes. 
This representation naturally lends itself to numerical discretizations, which are shown in this paper to have strong convergence rates of arbitrarily high polynomial order. 
This explains the potential, but also some limitations of such representations as the basis of Monte Carlo schemes for fractional volatility models such as the rough Bergomi model.
\end{abstract}

\maketitle

\tableofcontents

\section{Introduction}

This paper establishes strong convergence rates for certain numerical approximations of fractional processes. 
These approximations are inspired by Markovian representations of fractional Brownian motion \cite{carmona1998fractional,  carmona2000approximation, muravlev2011representation, harms2019affine} and of more general Volterra processes with singular kernels \cite{mytnik2015uniqueness, abijaber2019affine, abijaber2019markovian, abijaber2019multifactor, cuchiero2020generalized}.
The simplest such representation takes the form
\begin{align*}
W^H_t 
&:= 
\int_0^t (t-s)^{H-1/2} dW_s
=
\int_0^\infty \underbrace{\frac{1}{\Gamma(\tfrac12-H)}\int_0^t e^{(t-s)x} dW_s}_{=:Y_t(x)} \frac{dx}{x^{H+1/2}}, 
&&
t \in [0,\infty),
\end{align*}
where $W$ is standard Brownian motion, $W^H$ is Volterra Brownian motion\footnote{Also known as Riemann--Liouville fractional Brownian motion or L\'evy's definition of fractional Brownian motion.} with Hurst index $H \in (0,1/2)$, and $Y(x)$ is an Ornstein--Uhlenbeck process with speed of mean reversion $x \in (0,\infty)$.
The random field $Y_t(x)$, which is depicted in \autoref{fig:random_field}, has a version which is H\"older continuous in $t$ and smooth in $x$; see \autoref{lem:vol} for the precise statement.
Thanks to this spatial smoothness, the integral $dx$ can be approximated efficiently using high-order quadrature rules, following and extending \cite{carmona2000approximation, harms2019affine, abijaber2019lifting, abijaber2019multifactor}.
This leads to numerical approximations of the Volterra Brownian motion $W^H$.

\begin{figure}[h]%
\centering
\tiny
\begin{psfrags}%
\psfrag{a00}[Bc][Bc]{\PFGstyle $0.0$}%
\psfrag{a02}[Bc][Bc]{\PFGstyle $0.2$}%
\psfrag{a04}[Bc][Bc]{\PFGstyle $0.4$}%
\psfrag{a06}[Bc][Bc]{\PFGstyle $0.6$}%
\psfrag{a08}[Bc][Bc]{\PFGstyle $0.8$}%
\psfrag{a0}{\PFGstyle $\text{ 0}$}%
\psfrag{a10A}{\PFGstyle $10$}%
\psfrag{a10}[Bc][Bc]{\PFGstyle $1.0$}%
\psfrag{a5}{\PFGstyle $\text{ 5}$}%
\psfrag{t}[Bc][Bc]{\PFGstyle $\text{t}$}%
\psfrag{x0}[tc][tc]{\PFGstyle $0$}%
\psfrag{x11}[tc][tc]{\PFGstyle $1$}%
\psfrag{x2}[tc][tc]{\PFGstyle $0.2$}%
\psfrag{x4}[tc][tc]{\PFGstyle $0.4$}%
\psfrag{x6}[tc][tc]{\PFGstyle $0.6$}%
\psfrag{x8}[tc][tc]{\PFGstyle $0.8$}%
\psfrag{x}[Bc][Bc]{\PFGstyle $\text{x}$}%
\psfrag{y0}[cr][cr]{\PFGstyle $0$}%
\psfrag{y11}[cr][cr]{\PFGstyle $1$}%
\psfrag{y2}[cr][cr]{\PFGstyle $0.2$}%
\psfrag{y4}[cr][cr]{\PFGstyle $0.4$}%
\psfrag{y6}[cr][cr]{\PFGstyle $0.6$}%
\psfrag{y8}[cr][cr]{\PFGstyle $0.8$}%
\psfrag{Yxt}[Bc][Bc]{\PFGstyle $Y^x_t$}%
\psfrag{AAA}[tc][tc]{\PFGstyle $0.2$}%
\includegraphics[width=0.7\textwidth]{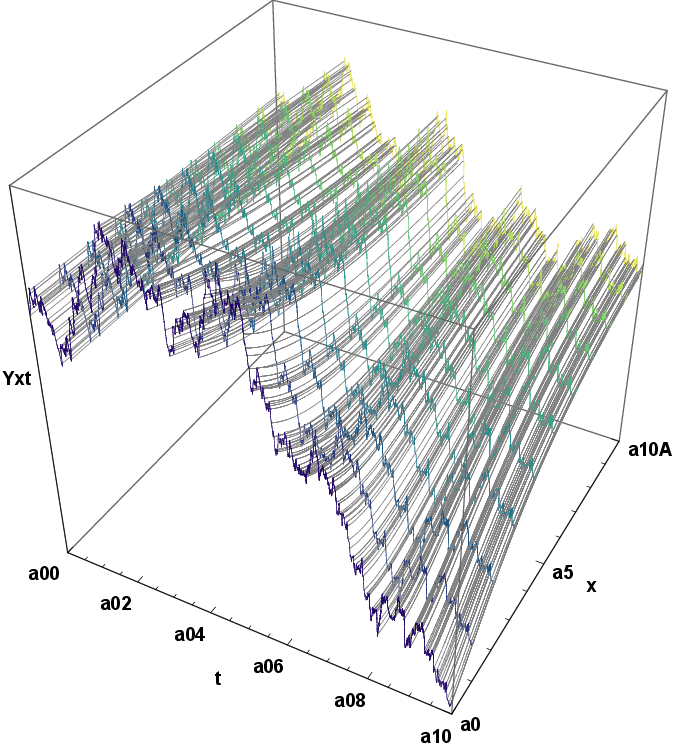}
\end{psfrags}
\caption{Volterra Brownian motion of Hurst index $H\in(0,1/2)$ can be represented as an integral $W^H_t=\int_0^\infty Y_t(x) x^{-1/2-H}dx$ over a Gaussian random field $Y_t(x)$. The smoothness of the random field in the spatial dimension $x$ allows one to approximate this integral efficiently using high order quadrature rules.}
\label{fig:random_field}
\end{figure}

The main result of this article is that Volterra Brownian motion can be approximated at arbitrarily high polynomial convergence rates by weighted sums of Ornstein--Uhlenbeck processes; see \autoref{thm:main} for the precise statement and error criterion.
By arbitrarily high polynomial convergence rates we mean that $m$-point interpolatory quadrature on $n$ suitably chosen spatial quadrature intervals leads to a discretization error of order $n^{-r}$ for all $r\in (0,2Hm/3)$; see \autoref{rem:rate}.
Thus, a given rate $r>0$ can be achieved by choosing $m>3r/(2H)$.
Note that low Hurst indices $H$ require high spatial quadrature orders $m$ to achieve a given approximation rate $r$. 
A visual impression of the quality of this approximation can be obtained from \autoref{fig:paths}.
The upper bound $2Hm/3$ on the convergence rate closely matches the numerically observed rate; see \autoref{fig:errors}. 

\begin{figure}[h]%
\begin{minipage}{0.5\textwidth}
\tiny
\begin{psfrags}%
\psfrag{AAA}[tc][tc]{\PFGstyle $0.2$}%
\psfrag{AAB}[tc][tc]{\PFGstyle $0.4$}%
\psfrag{AAC}[tc][tc]{\PFGstyle $0.6$}%
\psfrag{AAD}[tc][tc]{\PFGstyle $0.8$}%
\psfrag{AAE}[tc][tc]{\PFGstyle $1.0$}%
\psfrag{AA}[tc][tc]{\PFGstyle $0.0$}%
\psfrag{B}[tc][tc]{\PFGstyle $W^{H,n}_t$}%
\psfrag{TimeT}[bc][Bc]{\PFGstyle $\text{time ($t$)}$}%
\psfrag{VVVA}[cr][cr]{\PFGstyle $-2$}%
\psfrag{VVVB}[cr][cr]{\PFGstyle $0$}%
\psfrag{VVVC}[cr][cr]{\PFGstyle $2$}%
\psfrag{VVV}[cr][cr]{\PFGstyle $-4$}%
\psfrag{VVVD}[cr][cr]{\PFGstyle $4$}%
\includegraphics[width=\textwidth]{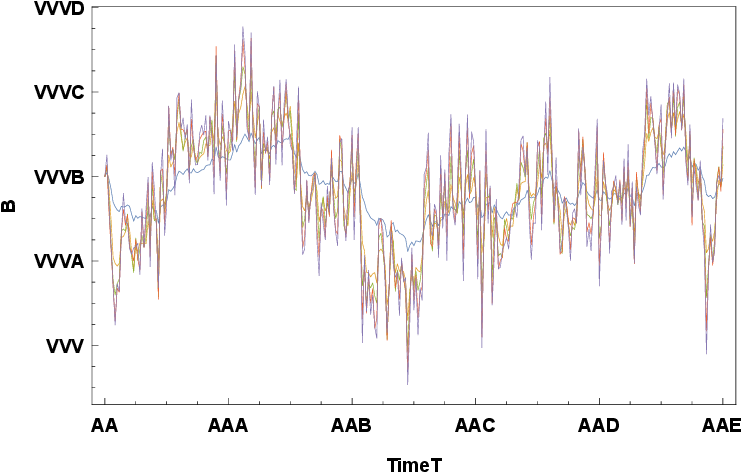}
\end{psfrags}
\end{minipage}%
\begin{minipage}{0.5\textwidth}
\tiny
\begin{psfrags}%
\psfrag{AAA}[tc][tc]{\PFGstyle $0.2$}%
\psfrag{AAB}[tc][tc]{\PFGstyle $0.4$}%
\psfrag{AAC}[tc][tc]{\PFGstyle $0.6$}%
\psfrag{AAD}[tc][tc]{\PFGstyle $0.8$}%
\psfrag{AAE}[tc][tc]{\PFGstyle $1.0$}%
\psfrag{AA}[tc][tc]{\PFGstyle $0.0$}%
\psfrag{B}[tc][tc]{\PFGstyle $W^{H,n}_t$}%
\psfrag{TimeT}[bc][Bc]{\PFGstyle $\text{time ($t$)}$}%
\psfrag{VVVA}[cr][cr]{\PFGstyle $ -5$}%
\psfrag{VVVB}[cr][cr]{\PFGstyle $ 0$}%
\psfrag{VVVC}[cr][cr]{\PFGstyle $ 5$}%
\psfrag{VVV}[cr][cr]{\PFGstyle $-10$}%
\psfrag{VVVD}[cr][cr]{\PFGstyle $10$}%
\includegraphics[width=\textwidth]{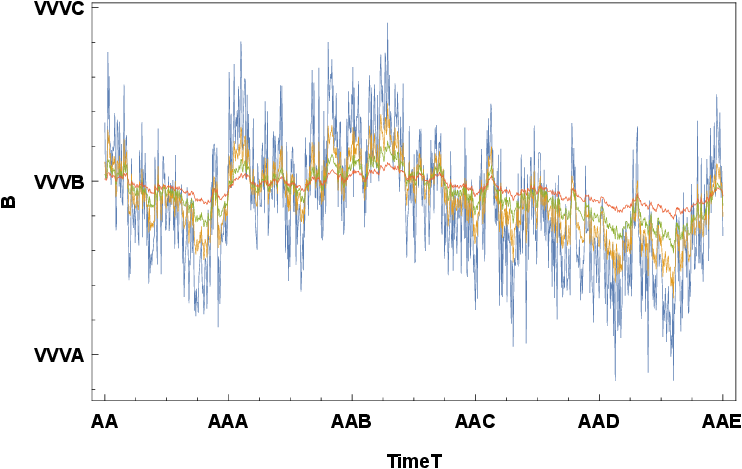}%
\end{psfrags}
\end{minipage}
\caption{%
Dependence of the approximations on the number $n$ of quadrature intervals and the Hurst index $H$. 
Left: varying the number 
$n\in\{2,5,10,20,40\}
=
\{\textcolor[rgb]{0.368417, 0.506779, 0.709798}{\blacksquare},
\textcolor[rgb]{0.880722, 0.611041, 0.142051}{\blacksquare},
\textcolor[rgb]{0.560181, 0.691569, 0.194885}{\blacksquare},
\textcolor[rgb]{0.922526, 0.385626, 0.209179}{\blacksquare},
\textcolor[rgb]{0.528488, 0.470624, 0.701351}{\blacksquare}\}$
of quadrature intervals with fixed parameters $H=0.1$, $m=5$. 
Right: 
varying the Hurst index 
$H\in\{0.1,0.2,0.3,0.4\}
=
\{\textcolor[rgb]{0.368417, 0.506779, 0.709798}{\blacksquare},
\textcolor[rgb]{0.880722, 0.611041, 0.142051}{\blacksquare},
\textcolor[rgb]{0.560181, 0.691569, 0.194885}{\blacksquare},
\textcolor[rgb]{0.922526, 0.385626, 0.209179}{\blacksquare}\}$ with fixed parameters $n=40$,  $m=5$.}
\label{fig:paths}
\end{figure}

\begin{figure}[h]%
\begin{minipage}{0.5\textwidth}
\tiny
\begin{psfrags}%
\psfrag{AAAAAA}[cr][cr]{\PFGstyle $10^{-2}$}%
\psfrag{AAAAAB}[cr][cr]{\PFGstyle $10^{-1}$}%
\psfrag{AAAAAC}[cr][cr]{\PFGstyle $10^{0}$}%
\psfrag{AAAAA}[cr][cr]{\PFGstyle $10^{-3}$}%
\psfrag{AAA}[tc][tc]{\PFGstyle $10^{1}$}%
\psfrag{AAB}[tc][tc]{\PFGstyle $10^{2}$}%
\psfrag{AAC}[tc][tc]{\PFGstyle $10^{3}$}%
\psfrag{AA}[tc][tc]{\PFGstyle $10^{0}$}%
\psfrag{B}[tc][tc]{\PFGstyle $\text{relative error ($e$)}$}%
\psfrag{NumberpOfInt}[bc][Bc]{\PFGstyle $\text{number of intervals ($n$)}$}%
\includegraphics[width=\textwidth]{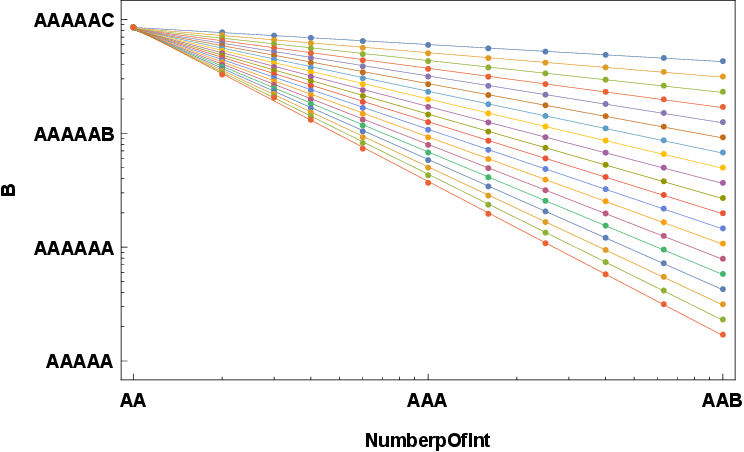}
\end{psfrags}
\end{minipage}%
\begin{minipage}{0.5\textwidth}
\tiny
\begin{psfrags}%
\psfrag{AAA}[tc][tc]{\PFGstyle $ 5$}%
\psfrag{AAB}[tc][tc]{\PFGstyle $10$}%
\psfrag{AAC}[tc][tc]{\PFGstyle $15$}%
\psfrag{AAD}[tc][tc]{\PFGstyle $20$}%
\psfrag{AA}[tc][tc]{\PFGstyle $ 0$}%
\psfrag{B}[tc][tc]{\PFGstyle $\text{rate ($r$)}$}%
\psfrag{InterpolOrd}[bc][Bc]{\PFGstyle $\text{interpolation order ($m$)}$}%
\psfrag{VVVA}[cr][cr]{\PFGstyle $0.2$}%
\psfrag{VVVB}[cr][cr]{\PFGstyle $0.4$}%
\psfrag{VVVC}[cr][cr]{\PFGstyle $0.6$}%
\psfrag{VVV}[cr][cr]{\PFGstyle $0.0$}%
\psfrag{VVVD}[cr][cr]{\PFGstyle $0.8$}%
\psfrag{VVVE}[cr][cr]{\PFGstyle $1.0$}%
\psfrag{VVVF}[cr][cr]{\PFGstyle $1.2$}%
\psfrag{VVVG}[cr][cr]{\PFGstyle $1.4$}%
\hspace{0.85em}\includegraphics[width=\textwidth-0.85em]{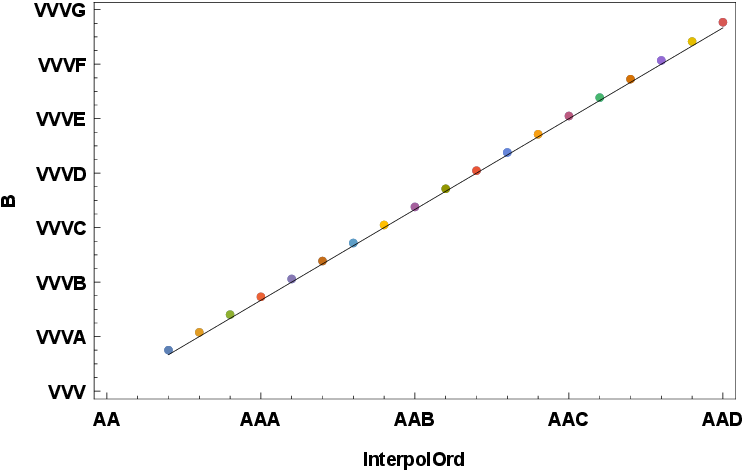}%
\end{psfrags}
\end{minipage}
\caption{
The upper bound $2Hm/3$ on the convergence rate established in Remark~\ref{rem:rate} for $m$-point interpolatory quadrature closely matches the numerically observed one (here: at $t=1$, computed analytically from the covariance functions of the Gaussian processes $W^H$ and $W^{H,n}$). 
Left: relative error $e=\|W^H_1-W^{H,n}_1\|_{L^2(\Omega)}/\|W^H_1\|_{L^2(\Omega)}$ for $m\in\{2,3,\dots,20\}=\{\textcolor[rgb]{0.368417, 0.506779, 0.709798}{\blacksquare},\textcolor[rgb]{0.880722, 0.611041, 0.142051}{\blacksquare},\dots,\textcolor[rgb]{0.8439466852489265, 0.3467106629502147, 0.3309221912517893}{\blacksquare}\}$ with $H=0.1$. 
Right: slopes of the lines in the left plot (dots) and predicted convergence rate (line).}
\label{fig:errors}
\end{figure}

The motivation of this article is to develop efficient Monte Carlo methods for fractional (or rough) volatility models \cite{gatheral2018volatility, bayer2016pricing, bennedsen2016decoupling, bayer2020regularity, horvath2017functional}, which have been introduced on the grounds of extensive empirical evidence \cite{gatheral2018volatility, bayer2016pricing,  bennedsen2016decoupling} and theoretical results \cite{alos2007short, fukasawa2011asymptotic, forde2017asymptotics, bayer2019short}.
Under our discretization, put prices in the rough Bergomi model converge at the same rate as the underlying fractional volatility process; see \autoref{thm:main}. 
By put-call parity, this extends to call prices if the the asset and volatility processes are driven by negatively correlated Brownian motions, as explained at the end of  \autoref{rem:putcall}.
A fully discrete Monte Carlo scheme for the rough Bergomi model can be obtained by discretizing the Ornstein--Uhlenbeck processes of \autoref{thm:main} in time. 
This can be done efficiently because the covariance matrix of the Ornstein--Uhlenbeck increments has low numerical rank if the time steps are small. 

To evaluate the computational complexity of our method, we consider the task of sampling a fractional process \smash{$(W^H_{i/k})_{i\in\{1,\dots,k\}}$} with Hurst index $H\in(0,1/2)$ at a temporal grid of $k$ equidistant time points.  
Our method has some additional parameters, which determine the spatial discretization of the integral representation, namely the number $n$ of spatial quadrature intervals and the order $m$ of the spatial quadrature. 
These are described in detail in \autoref{lem:dis}. 
On the above-mentioned task, our method achieves accuracy $n^{-r}$ at complexity $kn$ if the order of spatial quadrature is sufficiently high, i.e., if $m>3r/(2H)$ (see \autoref{rem:rate}).
Equivalently, accuracy $\epsilon$ can be achieved at complexity $k\epsilon^{-1/r}$, as stated in \autoref{tab:comparison}. 
Typically, one is interested in temporal grids of size $k=\epsilon^{-s}$ for some $s \in (0,\infty)$.
For instance, a value of $s$ slightly above $1/H$ guarantees that the piecewise constant interpolation of an $\epsilon$-accurate time-discrete approximation defines a continuous-time approximation of the same order of accuracy in the supremum norm. 
This is because the sample paths of the fractional process $W^H$ are nearly $H$-H\"older continuous. 
Under the assumption $k=\epsilon^{-s}$, \autoref{tab:comparison} shows that our method outperforms the methods Hosking and Dieker \cite{hosking1984modeling, dieker2004simulation} and Carmona, Coutin, and Montseny \cite{carmona2000approximation} but is outperformed by the hybrid scheme of Bennedsen, Lunde, and Pakkanen \cite{bennedsen2017hybrid} and by the circulant embedding method of Dietrich and Newsam \cite{dietrich1997fast}.
This can be verified by substituting $k=\epsilon^{-s}$ in Table~1.
Using exponentially converging quadrature rules such as Chebychev \cite{gass2016magic, gass2018chebyshev}, one could at best hope to reduce the complexity of our method from $k\epsilon^{-1/r}$ down to $k\log \epsilon^{-1}$. 
In the important special case $k=\epsilon^{-s}$ with $s=1/H$, this would result in exactly the same the complexity $\epsilon^{-1/H}\log\epsilon^{-1}$ as the hybrid scheme \cite{bennedsen2017hybrid} and the circulant embedding method \cite{dietrich1997fast}. 

\begin{table}[h]
\centering
\begin{tabular}{lccc}
\toprule
Method & Structure & Error & Complexity \\
\midrule
Cholesky & Static & 0 & $k^3$ \\
Hosking, Dieker \cite{hosking1984modeling, dieker2004simulation} & Recursive & 0 & $k^2$ \\
Dietrich, Newsam \cite{dietrich1997fast} & Static & 0 & $k\log k$ \\
Bennedsen, Lunde, Pakkanen \cite{bennedsen2017hybrid} & Recursive & $\epsilon=k^{-H}$ & $k \log k$ \\
Carmona, Coutin, Montseny \cite{carmona2000approximation} & Recursive & $\epsilon$ & $k\epsilon^{-3/(4H)}$ \\ 
This paper & Recursive & $\epsilon$ & $k\epsilon^{-1/r}$ for $r\in(0,\infty)$ \\
\bottomrule
\end{tabular}
\medskip
\caption{Complexity of several numerical methods for sampling a fractional process \smash{$(W^H_{i/k})_{i\in\{1,\dots,k\}}$} with Hurst index $H\in(0,1/2)$ at $k$ equidistant time points.}
\label{tab:comparison}
\end{table}

Several directions for future generalization and improvement come to mind. 
\autoref{thm:main} is proved by approximation in the Laplace domain, which implies convergence in the time domain by the continuity of the Laplace transform. 
As Volterra processes with Lipschitz drift and volatility coefficients depend continuously on the kernel in the $L^2$ norm, it would be interesting to check if similar convergence results hold also in this more general setting. 
The rate of convergence could potentially be improved using Chebychev quadrature, taking advantage of the real analyticity of the random field $Y_t(x)$ in the spatial variable $x$.
Finally, following \cite{bennedsen2017hybrid, mccrickerd2018turbocharging}, one could aim for more careful treatments of the singularity of the kernel near the diagonal and apply some variance reduction techniques. 

\section{Setting and notation}\label{sec:set}

We will frequently make the following assumptions.
Let $H\in (0,1/2)$,
let $\alpha=H+1/2$, 
let $\mu$ be the sigma-finite measure $x^{-\alpha}dx$ on the interval $(0,\infty)$,
let $p \in [1,\infty)$,
let $T \in (0,\infty)$,
let $(\Omega,\mathcal F,\mathbb P,(\mathcal F_t)_{t \in [0,T]})$ be a stochastic basis, 
and let $W,B\colon [0,T]\times\Omega\to\mathbb R$ be $(\mathcal F_t)_{t \in [0,T]}$-Brownian motions.

\section{Integral representation}\label{sec:int}

Recall from the introduction that Volterra Brownian motion $W^H$ can be lifted to a random field $Y_t(x)$ indexed by a temporal variable $t \in [0,\infty)$ and a spatial variable $x \in (0,\infty)$ \cite{carmona1998fractional, carmona2000approximation, muravlev2011representation, harms2019affine}.
The following lemma constructs a version of this random field which is continuous in the temporal variable and smooth in the spatial variable.
Moreover, it establishes bounds on the spatial derivatives and tails of the random field.
These bounds are needed for the subsequent error analysis in \autoref{sec:dis}.

The constants $m,\alpha,\beta,\gamma,\delta$ appearing in \autoref{lem:vol} are used consistently throughout the paper:
$m$ stands for the number of quadrature points in \autoref{def:qua} below, 
$\alpha=H+1/2$ denotes the Hurst index shifted by one half, 
$\beta$ describes spatial integrability of $\partial_x^m Y_t(x)$, 
$\gamma$ describes the integrability of the tail of $Y_t(x)$ as $x\to 0$, 
and $\delta$ describes the integrability of the tail of $Y_t(x)$ as $x\to\infty$.
The spaces of continuous, smooth, and integrable functions appearing in \autoref{lem:vol} carry their natural topologies and Borel sigma algebras; see \autoref{sec:aux}.

\begin{lemma}\label{lem:vol}
Assume the setting of \autoref{sec:set}.
\begin{enumerate}
\item\label{lem:vol0} There exists a measurable mapping
\begin{equation*}
Y \colon \Omega \to C([0,T],C^\infty((0,\infty),\mathbb R) \cap L^1((0,\infty),\mu)),
\end{equation*}
such that 
\begin{align*}
\forall t \in [0,\infty), \forall x \in (0,\infty): \quad \mathbb P\left[Y_t(x)=\frac{1}{\Gamma(\tfrac12-H)}\int_0^t e^{-(t-s)x} dW_s\right] = 1.
\end{align*}
\item\label{lem:vol1} Volterra Brownian motion is a linear functional of $Y$ in the sense that   
\begin{align*}
\forall t \in [0,T]: \quad \mathbb P\left[\int_0^\infty Y_t(x) \frac{dx}{x^\alpha}=\int_0^t (t-s)^{\alpha-1} dW_s\right] = 1.
\end{align*}

\item\label{lem:vol2} The following integrability conditions hold: for all $m \in \mathbb N_{>0}$, $\beta := m-1$, $\gamma := 1-\alpha$, and $\delta \in [0,\alpha-1/2)$, 
\begin{align*}
\left\|\sup_{t \in [0,T]}\sup_{x \in (0,\infty)}\left|x^\beta \partial_x^m Y_t(x)\right|\right\|_{L^p(\Omega)} &<\infty,
\\
\sup_{x_0\in [0,1]} x_0^{-\gamma} \left\| \sup_{t \in [0,T]} \left|\int_0^{x_0} Y_t(x)\frac{dx}{x^\alpha}\right|\right\|_{L^p(\Omega)} &< \infty, 
\\
\sup_{x_1\in [1,\infty)} x_1^{\delta} \left\| \sup_{t \in [0,T]} \left|\int_{x_1}^\infty Y_t(x)\frac{dx}{x^\alpha}\right|\right\|_{L^p(\Omega)} &< \infty.
\end{align*}
\end{enumerate}
\end{lemma}

\begin{proof}
\ref{lem:vol0} By \autoref{lem:continuity1, lem:continuity2}, the formula
\begin{equation*}
Y_t(x) := \frac{1}{\Gamma(\frac12-H)}\left(W_t - \int_0^t W_s x e^{-(t-s)x}ds\right), 
\qquad
t \in [0,T],
x \in (0,\infty), 
\end{equation*} 
defines a measurable map
\begin{equation*}
Y \colon \Omega \to C([0,T],C^\infty((0,\infty),\mathbb R) \cap L^1((0,\infty),\mu)).
\end{equation*}

\ref{lem:vol1} follows from the above and the stochastic Fubini theorem \cite{veraar2012stochastic}: for each $t \in [0,T]$, one has almost surely that
\begin{align*}
\int_0^\infty Y_t(x)\frac{dx}{x^\alpha}
&=
\frac{1}{\Gamma(\frac12-H)}\int_0^\infty \int_0^t e^{-(t-s)x} dW_s \frac{dx}{x^\alpha}
\\&=
\frac{1}{\Gamma(\frac12-H)} \int_0^t \int_0^\infty e^{-(t-s)x} \frac{dx}{x^\alpha} dW_s 
=
\int_0^t (t-s)^\alpha dW_s.
\end{align*}

\ref{lem:vol2}  
Let $C_1\in(0,\infty)$ be the constant in the maximal inequality for Ornstein--Uhlenbeck processes (see \autoref{lem:maximal}), i.e., 
\begin{align*}
\forall x \in (0,\infty):
\quad
\mathbb E\left[\sup_{t\in[0,T]}|Y_t(x)|\right]
\leq
C_1 \sqrt{\frac{\log(1+Tx)}{x}}.
\end{align*}
Recall that $\beta=m-1$, and define $C_2,C_3\in(0,\infty)$ as
\begin{align*}
C_2 &= 
\sup_{\substack{t \in (-\infty,0]\\x \in (0,\infty)}} |x^\beta\partial_x^m (x e^{tx})|
=
\sup_{\substack{t \in (-\infty,0]\\x \in (0,\infty)}} |x^{m-1}\partial_x^m \partial_t e^{tx}|
\\&=
\sup_{\substack{t \in (-\infty,0]\\x \in (0,\infty)}} |x^{m-1} \partial_t \partial_x^m e^{tx}|
=
\sup_{\substack{t \in (-\infty,0]\\x \in (0,\infty)}} |x^{m-1}\partial_t(t^m e^{tx})|
\\&=
\sup_{\substack{t \in (-\infty,0]\\x \in (0,\infty)}} \left|m(tx)^{m-1}+(tx)^m\right|e^{tx}
=
\sup_{y \in (-\infty,0]} \left|m y^{m-1} + y^m\right| e^{y}<\infty,
\\
C_3 &= \sup_{x \in (0,\infty)} x^{-(\alpha-\frac12-\delta)} \sqrt{\log(1+Tx)}<\infty.
\end{align*}
By 
the inequality $\log(1+Tx)\leq Tx$, one obtains the following three estimates:
\begin{align*}
\hspace{2em}&\hspace{-2em}
\mathbb E\left[\sup_{t \in [0,T]}\sup_{x \in (0,\infty)}\left|x^\beta \partial_x^m Y_t(x)\right|\right]
\\&= 
\mathbb E\left[\sup_{t \in [0,T]} \sup_{x \in (0,\infty)}\left| \int_0^t W_s x^\beta\partial_x^m (x e^{-(t-s)x}) ds \right| \right]
\\&\leq
C_2 T\, \mathbb E\left[\sup_{t \in [0,T]} |W_t| \right]<\infty,
\\
\hspace{2em}&\hspace{-2em}
\sup_{x_0\in [0,1]} x_0^{-\gamma} \mathbb E\left[ \sup_{t \in [0,T]} \left|\int_0^{x_0} Y_t(x)\frac{dx}{x^\alpha}\right|\right] 
\\&\leq
C_1 \sup_{x_0\in [0,1]} x_0^{-\gamma} \int_0^{x_0} \sqrt{\frac{\log(1+Tx)}{x}} \frac{dx}{x^\alpha}
\\&\leq
C_1 \sup_{x_0\in [0,1]} x_0^{-\gamma} \int_0^{x_0} \sqrt{T} \frac{dx}{x^\alpha}
= 
C_1 \sqrt{T} \gamma^{-1}
< \infty, 
\\
\hspace{2em}&\hspace{-2em}
\sup_{x_1\in [1,\infty)} x_1^{\delta} \mathbb E\left[ \sup_{t \in [0,T]} \left|\int_{x_1}^\infty Y_t(x)\frac{dx}{x^\alpha}\right|\right] 
\\&\leq
C_1 \sup_{x_1\in [1,\infty)} x_1^{\delta} \int_{x_1}^\infty \sqrt{\frac{\log(1+Tx)}{x}} \frac{dx}{x^\alpha}
\\&\leq
C_1 C_3 \sup_{x_1\in [1,\infty)} x_1^{\delta} \int_{x_1}^\infty x^{-1-\delta} dx
=
C_1 C_3 \delta^{-1} 
< \infty.
\end{align*}
This shows \ref{lem:vol2} for $p=1$. 
The generalization to $p \in [1,\infty)$ is immediate because the $L^p$ norms of a Banach-valued Gaussian random variable are mutually equivalent thanks to the Kahane--Khintchine inequality \cite[Theorem~V.5.3]{vakhania1987probability} applied to the Karhunen--Lo\`eve expansion \cite[Theorem~V.5.7]{vakhania1987probability}.
\end{proof}

\section{Discretization}\label{sec:dis}

In this section, the measure $\mu$ in the integral representation of Volterra Brownian motion is approximated by a weighted sum of Dirac measures.
More specifically, for each $n\in\mathbb N$, the positive half line is truncated to a finite interval $[\xi_{n,0},\xi_{n,n}]$. 
This interval is then split into subintervals by a geometric sequence $(\xi_{n,i})_{i\in\{1,\dots,n\}}$, 
and on each  subinterval $[\xi_{n,i},\xi_{n,i+1}]$ the measure $\mu$ is approximated by an $m$-point interpolatory quadrature rule $\mu_{n,i}$ such as e.g.\@ the Gauss rule.
Classical error analysis for interpolatory quadrature rules (see e.g.~\cite{brass2011quadrature}) then yields the desired convergence result. 

\begin{definition}\label{def:qua}
Let $a,b \in \mathbb R$ satisfy $a<b$, let $w\colon [a,b]\to [0,\infty)$ be a continuous function such that $\int_a^b w(x)dx>0$, and let $m \in \mathbb N_{>0}$. Then a measure $\mu$ on $[a,b]$ is called a non-negative $m$-point \emph{interpolatory quadrature rule} on $[a,b]$ with respect to the weight function $w$ if there are grid points $x_1,\dots,x_m \in [a,b]$ and weights $w_1,\dots,w_m \in [0,\infty)$ such that $\mu = \sum_{j=1}^m w_j \delta_{x_j}$ and 
\begin{equation*}
\forall k \in \{0,\dots,m-1\}:
\quad
\int_a^b x^k w(x) \mu(dx) = \int_a^b x^k w(x) dx.
\end{equation*}
\end{definition}

The following lemma discretizes the integral representation of Volterra Brownian motion using interpolatory quadrature rules and bounds the discretization error. 
The assumptions of the lemma are satisfied thanks to the bounds of \autoref{lem:vol}, where the same constants $\alpha,\beta,\gamma,\delta,m$ are used.

\begin{lemma}
\label{lem:dis}
Assume the setting of \autoref{sec:set},
let $m \in \mathbb N_{>0}$ and $\alpha,\beta,\gamma,\delta \in (0,\infty)$ satisfy $1-\alpha-\beta+m > 0$, 
let 
\begin{equation*}
Y\colon \Omega\to C([0,T], C^m((0,\infty))\cap L^1((0,\infty),\mu))
\end{equation*}
be a measurable function which satisfies the integrability conditions
\begin{align*}
\left\|\sup_{t \in [0,T]}\sup_{x \in (0,\infty)}\left|x^\beta \partial_x^m Y_t(x)\right|\right\|_{L^p(\Omega)} &< \infty,
\\
\limsup_{x_0\downarrow 0} x_0^{-\gamma} \left\| \sup_{t \in [0,T]} \left|\int_0^{x_0} Y_t(x)x^{-\alpha}dx\right|\right\|_{L^p(\Omega)} &< \infty, 
\\
\limsup_{x_1\uparrow \infty} x_1^{\delta} \left\| \sup_{t \in [0,T]} \left|\int_{x_1}^\infty Y_t(x)x^{-\alpha}dx\right| \right\|_{L^p(\Omega)} &< \infty, 
\end{align*}
let $r \in (0,\delta m/(1-\alpha-\beta+\delta+m))$, 
for each $n \in \mathbb N$ and $i \in \{0,\dots,n-1\}$ let
\begin{equation*}
\xi_{n,0}=n^{-r/\gamma}, 
\qquad
\xi_{n,n}=n^{r/\delta}, 
\qquad
\xi_{n,i}=\xi_{n,0}(\xi_{n,n}/\xi_{n,0})^{i/n},
\end{equation*}
let $\mu_{n,i}$ be a non-negative $m$-point interpolatory quadrature rule on $[\xi_{n,i},\xi_{n,i+1}]$ with respect to the weight function $x\mapsto x^{-\alpha}$, 
and let $\mu_n=\sum_{i=0}^{n-1}\mu_{n,i}$. 
Then 
\begin{equation*}
\sup_{n\in\mathbb N} n^r \left\|\sup_{t \in [0,T]} \left|\int_0^\infty Y_t(x) x^{-\alpha} (\mu_n(dx)-dx)\right|\right\|_{L^p(\Omega)}<\infty.
\end{equation*}
\end{lemma}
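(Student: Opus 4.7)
The plan is to split the error by decomposing $(0,\infty)$ into the tails $(0,\xi_{n,0}]$ and $[\xi_{n,n},\infty)$ and the middle region $[\xi_{n,0},\xi_{n,n}]$. Since $\mu_n$ is by construction supported in $[\xi_{n,0},\xi_{n,n}]$, the two tail contributions simplify to $\int_0^{\xi_{n,0}} Y_t(x)x^{-\alpha}\,dx$ and $\int_{\xi_{n,n}}^\infty Y_t(x)x^{-\alpha}\,dx$, which the two tail integrability hypotheses bound in $L^p(\Omega,C([0,T],\mathbb R))$ by a constant times $\xi_{n,0}^\gamma = n^{-r}$ and $\xi_{n,n}^{-\delta} = n^{-r}$, respectively. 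All the work is then in the middle region, which I would treat subinterval by subinterval.

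On each subinterval $[\xi_{n,i},\xi_{n,i+1}]$, I would invoke the classical Peano-kernel bound for interpolatory quadrature. Since $\mu_{n,i}$ has non-negative weights and is exact for $x\mapsto x^k x^{-\alpha}$, $k=0,\dots,m-1$, Taylor expansion of $Y_t$ about any point of the subinterval yields the pathwise estimate
\begin{equation*}
\left|\int_{\xi_{n,i}}^{\xi_{n,i+1}} Y_t(x)\,x^{-\alpha}(\mu_{n,i}(dx)-dx)\right| \leq \frac{2\,h_i^m}{m!}\sup_{x\in[\xi_{n,i},\xi_{n,i+1}]}|\partial_x^m Y_t(x)| \int_{\xi_{n,i}}^{\xi_{n,i+1}} x^{-\alpha}\,dx,
\end{equation*}
where $h_i := \xi_{n,i+1}-\xi_{n,i}$. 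Using the first integrability hypothesis to replace $\sup_x|\partial_x^m Y_t(x)|$ by $M\,\xi_{n,i}^{-\beta}$ for a single random constant $M \in L^p(\Omega)$, the elementary estimate $\int_{\xi_{n,i}}^{\xi_{n,i+1}}x^{-\alpha}\,dx \lesssim \xi_{n,i}^{-\alpha}h_i$, and the geometric identity $h_i = \xi_{n,i}(\rho-1)$ with $\rho := (\xi_{n,n}/\xi_{n,0})^{1/n}$, the per-subinterval error is dominated by a constant times $M\,\xi_{n,i}^{m+1-\alpha-\beta}(\rho-1)^{m+1}$.

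The last step is the geometric summation. The exponent $c := m+1-\alpha-\beta$ is strictly positive by the hypothesis $1-\alpha-\beta+m>0$, so $\sum_{i=0}^{n-1}\xi_{n,i}^c = \xi_{n,0}^c(\rho^{nc}-1)/(\rho^c-1)$ is dominated by its largest term and, via the expansion $\rho-1\sim r(1/\gamma+1/\delta)(\log n)/n$, is bounded by a constant times $\xi_{n,n}^c\,n/\log n$. Multiplying by $(\rho-1)^{m+1}$ and inserting $\xi_{n,n}=n^{r/\delta}$ produces a middle-region contribution of order $(\log n)^m n^{-m+rc/\delta}$ times $M$. The assumption $r<m\delta/(1-\alpha-\beta+m+\delta)$ rearranges exactly to $-m+rc/\delta<-r$, so the polylogarithmic factor is absorbed into the strict inequality and the $n^{-r}$ bound survives after taking the $L^p(\Omega)$ norm. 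The main technical obstacle will be the careful book-keeping of constants and the mild $\alpha$-dependence of $\int x^{-\alpha}\,dx$ on each subinterval (the borderline case $\alpha=1$ is easily absorbed into the same asymptotic), together with verifying that the pathwise supremum in $t$ can be pulled inside before applying $\|\cdot\|_{L^p(\Omega)}$; no new ideas are required beyond classical interpolation theory and the integrability bounds already provided.
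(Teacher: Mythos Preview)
Your proposal is correct and follows essentially the same route as the paper: split into the two tails (handled by the choices $\xi_{n,0}^\gamma=\xi_{n,n}^{-\delta}=n^{-r}$) and the middle region, bound each subinterval via the Peano-kernel/Taylor estimate to obtain a contribution of order $\xi_{n,i}^{\,1-\alpha-\beta+m}(\rho-1)^{m+1}$, and then sum the resulting geometric series. The only cosmetic difference is that the paper, instead of using $\rho-1\sim(\log n)/n$, absorbs one factor $(\rho-1)$ into the denominator $\rho^{c}-1$ and bounds the remaining ones by $(\rho-1)\le C_3\,n^{-1}(\xi_{n,n}/\xi_{n,0})^{\eta}$ for a carefully chosen $\eta>0$, which produces a clean power $n^{-r}$ with no logarithmic factor; your version with the residual $(\log n)^m$ is equally valid since the strict inequality on $r$ absorbs it.
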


\begin{proof} We define the constants
\begin{align*}
\eta &= \left(\frac{1}{r}-\frac{1-\alpha-\beta+m+\delta}{\delta m}\middle)\middle\slash\middle(\frac{1}{\gamma}+\frac{1}{\delta}\right) \in (0,\infty),
\\
C_1
&=
\frac{\pi^m}{m!2^m} \left\|\sup_{t \in [0,T]}\sup_{x \in (0,\infty)}|x^\beta Y^{(m)}_t(x)|\right\|_{L^p(\Omega)}
\in (0,\infty),
\\
C_2
&= 
\sup_{\lambda \in (1,\infty)} \frac{\lambda-1}{\lambda^{1-\alpha-\beta+m}-1}
\in
(0,1/(1-\alpha-\beta+m)],
\\
C_3 
&= 
\sup_{\xi \in [1,\infty)} \sup_{n \in [\log \xi,\infty)} \left(\xi^{1/n}-1\right) n \xi^{-\eta}
\in (0,\infty),
\\
C_4
&=
\min \left\{n \in \mathbb N; n\geq \log(\xi_{n,n}/\xi_{n,0})=\left(\frac{r}{\gamma}+\frac{r}{\delta}\right)\log(n)\right\}  
< \infty,
\end{align*}
where the upper bound on $C_2$ follows from Bernoulli's inequality
\begin{equation*}
\forall \lambda \in [0,\infty):
\quad
\lambda^{1-\alpha-\beta+m}=(1+(\lambda-1))^{1-\alpha-\beta+m} \geq 1 +(1-\alpha-\beta+m)(\lambda-1),
\end{equation*}
the finiteness of $C_3$ follows from the inequality
\begin{align*}
\forall \xi \in [1,\infty), \forall n \in [\log(\xi),\infty):
\quad
\xi^{1/n}-1
&=
\exp(\log(\xi)/n)-1
\leq 
e \log(\xi)/n,
\end{align*}
and the finiteness of $C_4$ follows from the fact that $n^{-1}\log(n)$ tends to zero as $n\to\infty$.
Recall that the measures $\mu_{n,i}$ are by assumption non-negative $m$-point interpolatory quadrature rules.  
Therefore, the corresponding quadrature error can be expressed as follows \cite[Theorem~4.2.3]{brass2011quadrature}: for each $t \in [0,T]$, $n \in \mathbb N$, and $i \in \{0,\dots,n-1\}$, one has
\begin{align*}
\hspace{2em}&\hspace{-2em}
\int_{\xi_{n,i}}^{\xi_{n,i+1}} Y_t(x) x^{-\alpha} (\mu_n(dx)-dx)
= 
\int_{\xi_{n,i}}^{\xi_{n,i+1}} \partial_x^m Y_t(x) K_{n,i}(x)dx,
\end{align*}
where the Peano kernel $K_{n,i} \colon [\xi_{n,i},\xi_{n,i+1}]\to\mathbb R$ is a measurable function which satisfies \cite[Theorem~5.7.1]{brass2011quadrature} 
\begin{align*}
\sup_{x \in [\xi_{n,i},\xi_{n,i+1}]} |K_{n,i}(x)|
\leq 
\frac{\pi^m}{m!} \left(\frac{\xi_{n,i+1}-\xi_{n,i}}{2}\right)^m \sup_{x \in [\xi_{n,i},\xi_{n,i+1}]} x^{-\alpha}. 
\end{align*}
Thus, one has for each $n \in \mathbb N$ that
\begin{align*}\tag{$*$}\label{equ:dis*}
\hspace{2em}&\hspace{-2em}
\left\|\sup_{t \in [0,T]}\left|\int_{\xi_{n,0}}^{\xi_{n,n}} Y_t(x) x^{-\alpha} (\mu_n(dx)-dx)\right|\right\|_{L^p(\Omega)}
\\&\leq
\sum_{i=0}^{n-1}\left\|\sup_{t \in [0,T]}\left|\int_{\xi_{n,i}}^{\xi_{n,i+1}} Y_t(x) K_{n,i}(x) dx\right|\right\|_{L^p(\Omega)}
\\&\leq
\sum_{i=0}^{n-1} \frac{\pi^m}{m!2^m} \left\|\sup_{\substack{t \in [0,T]\\x \in [\xi_{n,i},\xi_{n,i+1}]}}|x^\beta Y^{(m)}_t(x)|\right\|_{L^p(\Omega)}\!\! \xi_{n,i}^{-\alpha-\beta} (\xi_{n,i+1}-\xi_{n,i})^{m+1}
\\&\leq
C_1 \sum_{i=0}^{n-1} \xi_{n,i}^{-\alpha-\beta} (\xi_{n,i+1}-\xi_{n,i})^{m+1}. 
\end{align*}
This can be expressed as a geometric series: letting $\lambda_n = (\xi_{n,n}/\xi_{n,0})^{1/n}$, one has for each $n \in \mathbb N$ that
\begin{align*}
\eqref{equ:dis*} &=
C_1 \xi_{n,0}^{1-\alpha-\beta+m} (\lambda_n-1)^{m+1} \sum_{i=0}^{n-1} \lambda_n^{i(1-\alpha-\beta+m)} 
\\&=
C_1 \xi_{n,0}^{1-\alpha-\beta+m} (\lambda_n-1)^{m+1} \frac{\lambda_n^{n(1-\alpha-\beta+m)}-1}{\lambda_n^{1-\alpha-\beta+m}-1}
\\&=
C_1 (\lambda_n-1)^{m+1} \frac{\xi_{n,n}^{1-\alpha-\beta+m}-\xi_{n,0}^{1-\alpha-\beta+m}}{\lambda_n^{1-\alpha-\beta+m}-1}.
\end{align*}
Absorbing the denominator into one of the factors $(\lambda_n-1)$ and discarding the term $\xi_{n,0}$ yields for each $n \in \mathbb N$ that
\begin{align*}
\eqref{equ:dis*}&\leq
C_1 C_2(\lambda_n-1)^m \xi_{n,n}^{1-\alpha-\beta+m}
=
C_1 C_2((\xi_{n,n}/\xi_{n,0})^{1/n}-1)^m \xi_{n,n}^{1-\alpha-\beta+m}.
\end{align*}
For each $n \in \mathbb N \cap [C_4,\infty)$, this can be estimated by
\begin{align*}
\eqref{equ:dis*}&\leq
C_1 C_2 C_3^m n^{-m} (\xi_{n,n}/\xi_{n,0})^{\eta m} \xi_{n,n}^{1-\alpha-\beta+m}
\\&=
C_1 C_2 C_3^m n^{-m+\eta m r (1/\gamma+1/\delta)+(1-\alpha-\beta+m)r/\delta}
=
C_1 C_2 C_3^m n^{-r}.
\end{align*}
Therefore, noting that $n^r = \xi_{n,0}^{-\gamma}=\xi_{n,n}^{\delta}$, one has
\begin{align*}
\hspace{2em}&\hspace{-2em}
\limsup_{n\to\infty} n^r \mathbb E\left[\sup_{t \in [0,T]} \left|\int_0^\infty Y_t(x) x^{-\alpha} (\mu_n(dx)-dx)\right|\right]
\\&\leq
\limsup_{n\to\infty} \xi_{n,0}^{-\gamma}\ \mathbb E\left[\sup_{t \in [0,T]} \left|\int_{(0,\xi_{n,0}]} Y_t(x) x^{-\alpha} dx\right|\right]
\\&\qquad+
\limsup_{n\to\infty} \xi_{n,n}^{\delta}\ \mathbb E\left[\sup_{t \in [0,T]} \left|\int_{[\xi_{n,n},\infty)} Y_t(x) x^{-\alpha} dx\right|\right]
\\&\qquad+
\sup_{n\in\mathbb N} n^r \mathbb E\left[\sup_{t \in [0,T]}\left|\int_{\xi_{n,0}}^{\xi_{n,n}} Y_t(x) x^{-\alpha} (\mu_n(dx)-dx)\right|\right]
<\infty.
\qedhere
\end{align*}
\end{proof}

\begin{remark}
The choice of the quadrature rule in \autoref{lem:dis} is admittedly somewhat arbitrary but produces good results. 
The use of the geometric grid $\xi_{n,i}$ goes back to \cite{carmona2000approximation} and simplifies the error analysis compared to more complex subdivisions which distribute the error more equally. 
It would be interesting to explore if the holomorphicity of $x\mapsto Y_t(x)$ permits the use of quadrature rules with exponential convergence rates such as Chebychev quadrature; see the discussion in \autoref{sec:int}.
\end{remark}

\section{Rough Bergomi model}

The following lemma establishes that prices of put options in the rough Bergomi model converge at the same rate as the approximated Volterra processes. 
This holds not only for the Ornstein--Uhlenbeck approximations of \autoref{lem:dis}, but more generally for any approximation of the log-volatility in the $L^2([0,T]\times\Omega)$ norm.
Below, the space of real-valued Lipschitz functions $f\colon\mathbb R\to\mathbb R$ is denoted by $\operatorname{Lip}(\mathbb R)$ and endowed with the norm $\|f\|_{\operatorname{Lip}(\mathbb R)}=|f(0)|+\sup_{x\neq y}|f(y)-f(x)| |y-x|^{-1}$.

\begin{lemma}\label{lem:bergomi}
Assume the setting of \autoref{sec:set},
let $V,\smashedtilde V,S,\smashedtilde S\colon [0,T]\times\Omega\to\mathbb R$ be continuous stochastic processes with $V_0=\smashedtilde V_0=0$ and
\begin{equation*}
\forall t \in [0,T]: 
\quad
S_t = 1 + \int_0^t S_s \exp(V_s)dW_s, 
\quad
\smashedtilde S_t = 1 + \int_0^t \smashedtilde S_s \exp(\smashedtilde V_s)dW_s,
\end{equation*}
and let $f\colon(0,\infty)\to\mathbb R$ be a measurable function such that $f \circ \exp \in \operatorname{Lip}(\mathbb R)$. 
Then 
\begin{multline*}
\big|\mathbb E[f(S_T)]-\mathbb E[f(\smashedtilde S_T)]\big|
\leq 
\|f\circ\exp\|_{\operatorname{Lip}(\mathbb R)}\big(\sqrt{T}+6\big)
\\
\times\left\|\exp(2|V|)+\exp(2|\smashedtilde V|)\right\|_{L^2(\Omega,C([0,T]))}\|V-\smashedtilde V\|_{L^2([0,T]\times\Omega)}.
\end{multline*}
\end{lemma}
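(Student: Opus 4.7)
The plan is to use the Lipschitz hypothesis on $f\circ\exp$ to reduce the claim to an $L^1$ estimate for $\log S_T-\log\smashedtilde S_T$. Since $S$ and $\smashedtilde S$ are strictly positive stochastic exponentials (of $e^V\cdot W$ and $e^{\smashedtilde V}\cdot W$), Itô's formula gives
\begin{equation*}
\log S_T-\log\smashedtilde S_T
=
\underbrace{\int_0^T(e^{V_s}-e^{\smashedtilde V_s})\,dW_s}_{=:I_1}
-\tfrac12\underbrace{\int_0^T(e^{2V_s}-e^{2\smashedtilde V_s})\,ds}_{=:I_2},
\end{equation*}
and the Lipschitz hypothesis yields $|f(S_T)-f(\smashedtilde S_T)|\leq\|f\circ\exp\|_{\operatorname{Lip}(\mathbb R)}\,|\log S_T-\log\smashedtilde S_T|$. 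I would then bound $\mathbb E|I_1|$ and $\tfrac12\mathbb E|I_2|$ separately, each by a constant times $\|e^{2|V|}+e^{2|\smashedtilde V|}\|_{L^2(\Omega,C([0,T]))}\,\|V-\smashedtilde V\|_{L^2([0,T]\times\Omega)}$, arranging the constants to sum to $\sqrt T+6$.

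For the drift $I_2$, which should be routine, the mean value theorem yields $|e^{2V}-e^{2\smashedtilde V}|\leq 2(e^{2|V|}+e^{2|\smashedtilde V|})|V-\smashedtilde V|$. I would pull the exponential factor out of the $ds$-integral as a supremum in $s$, apply Cauchy--Schwarz in $\Omega$ to extract $\|e^{2|V|}+e^{2|\smashedtilde V|}\|_{L^2(\Omega,C([0,T]))}$, and apply Cauchy--Schwarz in $ds$ to turn $\int_0^T|V-\smashedtilde V|\,ds$ into $\sqrt T\,\|V-\smashedtilde V\|_{L^2([0,T]\times\Omega)}$. This contributes the $\sqrt T$ summand to the prefactor.

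The martingale $I_1$ is where the real work lies. Burkholder--Davis--Gundy in $L^1$ for continuous martingales (with the constant $\leq 6$) gives $\mathbb E|I_1|\leq 6\,\mathbb E\sqrt{\int_0^T(e^V-e^{\smashedtilde V})^2\,ds}$. Using $(e^V-e^{\smashedtilde V})^2\leq 2(e^{2|V|}+e^{2|\smashedtilde V|})(V-\smashedtilde V)^2$, pulling the exponential factor out as a supremum in $s$, and applying Cauchy--Schwarz in $\Omega$ will produce
\begin{equation*}
\mathbb E|I_1|\leq 6\sqrt 2\,\bigl(\mathbb E\sup_{s\in[0,T]}(e^{2|V_s|}+e^{2|\smashedtilde V_s|})\bigr)^{1/2}\,\|V-\smashedtilde V\|_{L^2([0,T]\times\Omega)},
\end{equation*}
which contains only the square root of the $L^1$ norm of the supremum, rather than the $L^2$ norm required by the target bound. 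The main obstacle is to close this one-half-exponent gap. My plan is to exploit the initial condition $V_0=\smashedtilde V_0=0$, which forces $X:=\sup_{s\in[0,T]}(e^{2|V_s|}+e^{2|\smashedtilde V_s|})\geq 2$ almost surely; then $X\leq X^2/2$, so $(\mathbb E X)^{1/2}\leq\|X\|_{L^2(\Omega)}/\sqrt 2$. This inequality cancels the factor $\sqrt 2$ exactly and upgrades the estimate to the required $L^2$ norm $\|e^{2|V|}+e^{2|\smashedtilde V|}\|_{L^2(\Omega,C([0,T]))}$. Combining the bounds on $I_1$ and $I_2$ then yields the stated prefactor $\sqrt T+6$.
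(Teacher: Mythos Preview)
Your argument is correct and follows essentially the same route as the paper: reduce to $\|\log S_T-\log\smashedtilde S_T\|_{L^1(\Omega)}$ via the Lipschitz hypothesis, split into the drift and martingale parts, and control each with the elementary inequality $|e^x-e^y|\leq(e^x+e^y)|x-y|$, Cauchy--Schwarz, and BDG with constant $6$.

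The only place you deviate is the martingale term, where you take a small detour that the paper avoids. You bound $(e^{V}+e^{\smashedtilde V})^2\leq 2(e^{2|V|}+e^{2|\smashedtilde V|})$ \emph{before} pulling the factor out of the square root, which leaves you with $\sqrt{\sup_s X}$ (where $X=e^{2|V|}+e^{2|\smashedtilde V|}$) and hence only $(\mathbb E X)^{1/2}$ after Cauchy--Schwarz; you then recover the missing half-exponent via the neat observation $X\geq 2\Rightarrow X\leq X^2/2$. The paper instead pulls out the linear factor $\sup_s(e^{V_s}+e^{\smashedtilde V_s})$ from under the square root first and only afterwards bounds it crudely by $\sup_s(e^{2|V_s|}+e^{2|\smashedtilde V_s|})$; Cauchy--Schwarz then yields $\|X\|_{L^2(\Omega)}$ directly, with no need for the $X\geq 2$ trick or the hypothesis $V_0=\smashedtilde V_0=0$ at this step. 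Both routes land on exactly the same constant $6$.
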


\begin{proof}
It is sufficient to control the log prices in $L^1$ because
\begin{equation*}
\big|\mathbb E[f(S_T)]-\mathbb E[f(\smashedtilde S_T)]\big|
\leq
\|f\circ\exp\|_{\operatorname{Lip}(\mathbb R)} \|\log(S_T)-\log(\smashedtilde S_T)\|_{L^1(\Omega)}.
\end{equation*}
The basic inequality
\begin{equation*}
\forall x,y \in \mathbb R:
\quad
\big|\exp(x)-\exp(y)\big|
\leq 
\big(\exp(x)+\exp(y)\big) |x-y|
\end{equation*}
and the Burkholder--Davis--Gundy inequality \cite[Theorem~1.2]{beiglbock2015pathwise} imply that
\begin{align*}
\hspace{2em}&\hspace{-2em}
\|\log(S_T)-\log(\smashedtilde S_T)\|_{L^1(\Omega)}
\\&=
\left\|-\frac12\int_0^T\big(\exp(2V_t)-\exp(2\smashedtilde V_t)\big)dt+\int_0^T\big(\exp(V_t)-\exp(\smashedtilde V_t)\big)dW_t\right\|_{L^1(\Omega)}
\\&\leq
\left\|\frac12\int_0^T\big(\exp(2V_t)+\exp(2\smashedtilde V_t)\big)(2V_t-2\smashedtilde V_t)dt\right\|_{L^1(\Omega)}
\\&\qquad
+6\left\|\sqrt{\int_0^T\big(\exp(V_t)+\exp(\smashedtilde V_t)\big)^2(V_t-\smashedtilde V_t)^2dt}\right\|_{L^1(\Omega)}
\\&\leq
\left\|\exp(2|V|)+\exp(2|\smashedtilde V|)\right\|_{L^2(\Omega,C([0,T]))}
\\&\qquad\times
\left(\left\|\int_0^T(V_t-\smashedtilde V_t)dt\right\|_{L^2(\Omega)}+6\left\|\sqrt{\int_0^T(V_t-\smashedtilde V_t)^2dt}\right\|_{L^2(\Omega)}\right)
\\&\leq
\left\|\exp(2|V|)+\exp(2|\smashedtilde V|)\right\|_{L^2(\Omega,C([0,T]))}
\big(\sqrt{T}+6\big) \left\|V-\smashedtilde V\right\|_{L^2([0,T]\times\Omega)}.
\qedhere
\end{align*}
\end{proof}

\begin{remark}\label{rem:putcall}
For each $K \in (0,\infty)$ the put-option payoff 
\begin{equation*}
f\colon (0,\infty) \to \mathbb R, \qquad x \mapsto (K-x)_+,
\end{equation*}
satisfies the assumption of \autoref{lem:bergomi} that $f\circ\exp\in\operatorname{Lip}(\mathbb R)$ because
\begin{equation*}
\sup_{\substack{x,y\in\mathbb R\\x\neq y}}\frac{|f(e^y)-f(e^x)|}{|y-x|}\leq e^K<\infty.
\end{equation*}
The call-option payoff does not have this property, but the prices of call options can be obtained by put-call parity if $W$ and $B$ are negatively correlated because this implies that $S$ is a martingale \cite{gassiat2019martingale}.
\end{remark}

\section{Main result}\label{sec:main}

The following theorem combines the analyses of Lemmas~\ref{lem:vol}--\ref{lem:bergomi} to show that Volterra Brownian motion can be approximated numerically at arbitrarily high polynomial convergence rates $r$.
The same convergence rate $r$ is inherited by the associated put prices in the rough Bergomi model.

\begin{theorem}\label{thm:main}
Assume the setting of \autoref{sec:set}.
For any given $r \in (0,\infty)$,
the following statements hold:
\begin{enumerate}
\item\label{thm:main1} 
Volterra Brownian motion can be approximated at rate $n^{-r}$ by a sum of $n$ Ornstein--Uhlenbeck processes in the following sense: 
for each $n\in\mathbb N$ there are speeds of mean reversion $x_{n,i}\in(0,\infty)$ and weights $w_{n,i}\in(0,\infty)$, $1\leq i\leq n$, such that the continuous versions $W^H$ and $W^{H,n}$ of the stochastic integrals
\begin{align*}
W^H_t&:=\int_0^t (t-s)^{H-1/2} dW_s, 
&
W^{H,n}_t&:=\sum_{i=1}^n w_{n,i}\int_0^t e^{-(t-s)x_{n,i}} dW_s,
&&
t \in [0,T].
\end{align*}
satisfy
\begin{equation*}
\sup_{n \in \mathbb N} n^r \left\| W^H-W^{H,n}\right\|_{L^p(\Omega,C([0,T],\mathbb R))}<\infty.
\end{equation*}

\item\label{thm:main2} 
Under the above approximation, put prices in the rough Bergomi model converge at rate $n^{-r}$ in the following sense: 
the processes $S$ and $S^n$ defined for all $t \in [0,T]$ and $n \in \mathbb N$ by 
\begin{align*}
S_t &:= 1 + \int_0^t S_s \exp\big(W^H_s-\tfrac12\mathbb E[(W^H_s)^2]\big) dB_s, 
\\ 
S^{n}_t &:= 1 + \int_0^t S^n_s \exp\big(W^{H,n}_s-\tfrac12\mathbb E[(W^{H,n}_s)^2]\big) dB_s, 
\end{align*}
satisfy for all strikes $K \in [0,\infty)$ that
\begin{equation*}
\sup_{n \in \mathbb N} n^r \left| \mathbb E\left[(K-S_T)_+\right]-\mathbb E\left[(K-S^n_T)_+\right]\right|<\infty.
\end{equation*}
\end{enumerate}
\end{theorem}

\begin{proof}
\ref{thm:main1} follows from the integral representation in \autoref{lem:vol} and its discretization in \autoref{lem:dis}. 
More precisely, the $m$-point quadrature rule in \autoref{lem:dis} converges at any rate $r<\delta m/(1-\alpha-\beta+\delta+m)=2Hm/3$, where the parameters $\alpha=H+1/2$, $\beta=m-1$, $\gamma=1/2-H$, and $\delta=H$ are as in \autoref{lem:vol}. 
The speeds of mean reversion $x_{n,i}$ and weights $w_{n,i}$ are determined by the relation $\mu_n=\sum_i w_{n,i}\delta_{x_{n,i}}$, where $\mu_n$ is as in \autoref{lem:dis}.
Moreover, \ref{thm:main2} follows from \ref{thm:main1} and \autoref{lem:bergomi}.
\end{proof}

\begin{remark}
\label{rem:rate}
The proof of \autoref{thm:main} shows that $m$-point interpolatory quadrature on $n$ suitably chosen spatial quadrature intervals leads to a discretization error of order $n^{-r}$ for all $r \in (0,2Hm/3)$.
\end{remark}

\appendix

\section{Auxiliary results}
\label{sec:aux}

The space $C([0,T],\mathbb R)$ of continuous real-valued functions on an interval $[0,T]$ is Banach with the supremum norm.
Moreover, the space $C^\infty((0,\infty),\mathbb R)$ of smooth real-valued functions on $(0,\infty)$ is locally convex with the family of seminorms $f\mapsto\sup_{x\in K}|\partial_x^k f(x)|$, where $K$ runs through the compact subsets of $(0,\infty)$ and $k$ through the natural numbers. 
Similarly, the space $C([0,T],C^\infty((0,\infty),\mathbb R))$ is locally convex with the family of seminorms $f\mapsto\sup_{t \in [0,T]}\sup_{x \in K}|\partial_x^k f(t)(x)|$ for $K$ and $k$ as before.

\begin{lemma}
\label{lem:continuity1}
Assume the setting of \autoref{sec:set}.
Then the following function is continuous:
\begin{equation*}
C([0,T],\mathbb R) \ni w \mapsto \Bigg((t,x)\mapsto w_t - \int_0^t w_s x e^{-(t-s)x}ds\Bigg) \in C([0,T],C^\infty((0,\infty),\mathbb R)).
\end{equation*}
\end{lemma}

\begin{proof}
It is sufficient to show for each $k \in \mathbb N$ and each compact $K \subset (0,\infty)$ that the following mapping is continuous: 
\begin{equation*}
C([0,T],\mathbb R) \ni w \mapsto \Bigg((t,x)\mapsto \partial_x^k w_t - \partial_x^k\int_0^t w_s x e^{-(t-s)x}ds\Bigg) \in C([0,T],C(K,\mathbb R)).
\end{equation*}
This is obvious because this is a bounded linear map between Banach spaces. 
\end{proof}

The following maximal inequality for Ornstein--Uhlenbeck processes has been shown by \textcite[Theorem~2.5 and Remark~2.6]{graversen2000maximal}. 

\begin{lemma}
\label{lem:maximal}
Assume the setting of \autoref{sec:set}.
For each $x \in (0,\infty)$, let $Y(x)\colon\Omega\to C([0,T],\mathbb R)$ be a measurable map such that
\begin{align*}
\forall t \in [0,T], \forall x \in (0,\infty):
\quad
\mathbb P\left[Y_t(x) = \frac{1}{\Gamma(\frac12-H)}\int_0^t e^{-(t-s)x}dW_s\right]=1.
\end{align*}
Then there exists a universal constant $C_1 \in (0,2)$ such that the following maximal inequality holds: 
\begin{align*}
\forall x \in (0,\infty):
\quad
\mathbb E\left[\sup_{t\in[0,T]}|Y_t(x)|\right]
\leq
C_1 \sqrt{\frac{\log(1+Tx)}{x}}.
\end{align*}
\end{lemma}

The following result has been shown in \cite[Theorem~2.11]{harms2019affine}. 
We reproduce the argument here and give a simpler proof of measurability.
Recall from \autoref{sec:set} that $\mu=x^{-\alpha}dx$ is a sigma-finite measure on $(0,\infty)$ and that, accordingly, the space $L^1((0,\infty),\mu)$ of $\mu$-integrable functions is a separable Banach space. 
Its intersection with the locally convex space $C((0,\infty),\mathbb R)$ is again locally convex with the union of the corresponding families of seminorms.

\begin{lemma}
\label{lem:continuity2}
Assume the setting of \autoref{sec:set}, 
and let $Y\colon\Omega\to C([0,T],C((0,\infty),\mathbb R))$ be a measurable map such that
\begin{align*}
\forall t \in [0,T], \forall x \in (0,\infty):
\quad
\mathbb P\left[Y_t(x) = \frac{1}{\Gamma(\frac12-H)}\int_0^t e^{-(t-s)x}dW_s\right]=1.
\end{align*}
Then $Y$ almost surely takes values in the space $C([0,T],L^1((0,\infty),\mu))$ and is measurable as a map
$$Y\colon \Omega\to C([0,T],C((0,\infty),\mathbb R)\cap L^1((0,\infty),\mu)).$$
\end{lemma}

\begin{proof}
The expression
\begin{align*}
\mathbb E\left[\int_0^\infty \sup_{t \in [0,T]} |Y_t(x)|\frac{dx}{x^\alpha}\right]
=
\int_0^\infty \mathbb E\left[\sup_{t \in [0,T]} |Y_t(x)|\right]\frac{dx}{x^\alpha}
\end{align*}
is well-defined thanks to the continuity in $t$ of $Y_t(x)$, and is finite thanks to \autoref{lem:maximal}. 
Thus, the dominated convergence theorem implies that $Y$ has continuous sample paths in $L^1((0,\infty),\mu)$.
It remains to show that $Y\colon\Omega\to C([0,T],L^1((0,\infty),\mu))$ is measurable. 
As the Borel sigma algebra on $C([0,T],L^1((0,\infty),\mu))$ is generated by point evaluations at $t \in [0,T]$ \cite[Lemma~4.53]{aliprantis2006infinite}, it suffices to show for each $t \in [0,T]$ that $Y_t\colon\Omega\to L^1((0,\infty))$ is measurable.
Moreover, by Pettis' measurability theorem \cite[Proposition~1.1.1]{hytoenen2016analysis} and the separability of $L^1((0,\infty),\mu)$, it suffices to show that $Y_t$ is weakly measurable, i.e., that $\int_0^\infty Y_t(x) f(x) \mu(dx)\colon\Omega\to\mathbb R$ is measurable for each $f \in L^\infty((0,\infty),\mu)$. 
This follows by approximation 
\begin{align*}
\int_0^\infty Y_t(x) f(x) \mu(dx) 
=
\lim_{n\to\infty}\lim_{m\to\infty} \int_{1/n}^n Y_t(x) \mu_{n,m}(dx),
\end{align*}
where for each $n \in \mathbb N$, $(\mu_{n,m})_{m\in\mathbb N}$ is a sequence of atomic signed measures on the interval $[1/n,n]$ which converges weakly to the signed measure $f\mu$ on the same interval.
\end{proof}

\printbibliography

\end{document}